\begin{document}

\title{Policy Gradients using Variational Quantum Circuits}


\author*[1,2,3]{\fnm{André} \sur{Sequeira}}\email{andre.sequeira@inl.int}

\author[1,2,3]{\fnm{Luis} \sur{Paulo Santos}}\email{psantos@di.uminho.pt}
\equalcont{These authors contributed equally to this work.}

\author[1,2,3]{\fnm{Luis} \sur{Soares Barbosa}}\email{lsb@di.uminho.pt}
\equalcont{These authors contributed equally to this work.}

\affil*[1]{\orgdiv{Department of Informatics}, \orgname{University of Minho}, \state{Braga}, \country{Portugal}}

\affil[2]{\orgdiv{HASLab}, \orgname{INESC TEC}, \state{Braga}, \country{Portugal}}
\affil[3]{\orgname{International Nanotechnology Laboratory (INL)}, \state{Braga}, \country{Portugal}}


\abstract{Variational Quantum Circuits are being used as versatile Quantum Machine Learning models. Some empirical results exhibit an advantage in supervised and generative learning tasks. However, when applied to Reinforcement Learning, less is known. In this work, we considered a Variational Quantum Circuit composed of a low-depth hardware-efficient ansatz as the parameterized policy of a Reinforcement Learning agent. We show that an $\epsilon$-approximation of the policy gradient can be obtained using a logarithmic number of samples concerning the total number of parameters. We empirically verify that such quantum models behave similarly to typical classical neural networks used in standard benchmarking environments and quantum control, using only a fraction of the parameters. Moreover, we study the Barren Plateau phenomenon in quantum policy gradients using the Fisher information matrix spectrum.}

\keywords{Quantum Machine Learning, Reinforcement Learning, Policy Gradients, Quantum Control}



\maketitle

\section{Introduction}\label{sec: Introduction}
Reinforcement Learning (RL) is responsible for many relevant developments in Artificial Intelligence (AI). Successes such as beating the world champion of Go \cite{Silver2016MasteringTG} and solving numerous complex games without any human intervention \cite{Schrittwieser_2020} were relevant milestones in AI, providing optimal planning without supervision. RL is paramount in complex real-world problems such as self-driving vehicles \cite{kiran2021deep}, automated trading \cite{liu2020finRL,Mosavi_2020}, recommender systems \cite{afsar2021reinforcement}, quantum physics \cite{Dalgaard_2020}, among many others. Recent advancements in RL are strongly associated with advances in Deep Learning \cite{dl_bengio} since scaling to large state/action space environments is possible, as opposed to tabular RL \cite{sutton_RL}.\\
Previous results suggest that RL agents obeying the rules of quantum mechanics can outperform classical RL agents \cite{Dunjko2016,Dunjko2017a,Paparo2014,Sequeira_2021,Dunjko_2018,Saggio_2021}. However, these suffer from the same scaling problem as classical tabular RL: they do not scale easily to real-world problems with large state-action spaces. Additionally, the lack of fault-tolerant quantum computers \cite{Preskill1997FaulttolerantQC} further compromises the ability to handle problems of significant size.\\
Variational Quantum Circuits (VQCs) are a viable alternative since state-action pairs can be parameterized, enabling, at least in theory, a reduction in the circuit's complexity. Moreover, VQCs could enable shallow enough circuits to be confidently executed on current NISQ (\textit{Noisy Intermediate Scale Quantum}) hardware \cite{Preskill_2018} without resorting to typical brute force search over the state/action space as in the quantum tabular setting \cite{Sequeira_2021,Dunjko2016}. Variational models are also referred to as approximately universal quantum neural networks \cite{farhi2018classification, Schuld_2021}. Nevertheless, fundamental questions on the expressivity and trainability of VQCs remain to be answered, especially from a perspective relevant to RL.\\
This paper proposes an RL agent's policy resorting to a shallow VQC and studies its effectiveness when embedded in the Monte-Carlo-based policy gradient algorithm REINFORCE \cite{Williams2004SimpleSG} throughout standard benchmarking environments. However, benchmarking variational algorithms for classical environments exhibit a trade of information between a quantum and a classical channel that incurs an overhead from encoding classical information into the quantum processor. Efficient encoding of real-world data constitutes a real bottleneck for NISQ devices, with the consequence of neglecting any potential quantum advantage \cite{LaRose2020RobustDE}. In the case of a quantum agent-environment interface, the cost of data encoding can often be neglected, and there is room for potential quantum advantages from quantum data \cite{Huang_2021}. In optimal quantum control, gate fidelity is improved by exploiting the full knowledge of the system's Hamiltonian \cite{doi:10.1146/annurev-control-061520-010444}. However, such methods are only viable when the system's dynamics are known. Thus, applying variational quantum methods may indeed be relevant \cite{lamata_2021}. In this setting, we considered a quantum RL agent that optimizes the gate fidelity in a model-free setting, learning directly from the interface with the noisy environment.\\ 
\\
The main contributions of this paper are:
\begin{itemize}
\item Design of a variational softmax-policy using a shallow VQC similar to or outperforming long-term cumulative reward compared to a restricted class of classical neural networks used in a set of standard benchmarking environments and the problem of quantum state preparation, using a fraction of the number of trainable parameters.
\item Demonstration of a logarithmic sample complexity concerning the number of parameters in gradient estimation.
\item Empirical verification of different parameter initialization strategies for variational policy gradients. 
\item Study of the barren plateau phenomenon in quantum policy gradient optimization using the Fisher information matrix spectrum.
\end{itemize}

The rest of the paper is organized as follows. Section \ref{sec: Related work} reviews quantum variational RL's state-of-the-art. Section \ref{sec: Policy Gradients} summarizes the theory behind the classical policy gradient algorithm used in this work. Section \ref{sec: QPG} details each block of the proposed VQC and the associated quantum policy gradient algorithm. Section \ref{sec: gradient estimation} explores trainability under gradient-based optimization using quantum hardware and its corresponding sample complexity. Section \ref{sec: performance} presents the performance of the quantum variational algorithm in simulated benchmarking environments. Section \ref{sec: advantage} analyzes the number of parameters trained and the Fisher Information spectrum associated with the classical/quantum policy gradient. Section \ref{sec: conclusion} closes the paper with some concluding remarks and suggestions for future work.

\section{Related Work}\label{sec: Related work}

Despite numerous publications focusing on Quantum Machine Learning (QML), the literature on variational methods applied to RL remains scarce. Most results to date focus on value-based function approximation rather than policy-based. Chen et al. \cite{Chen2020} use VQCs as quantum value function approximators for discrete state spaces, and, in \cite{Lockwood2020}, the authors generalize the former result to continuous state spaces. Lockwood et al. \cite{Lockwood2021PlayingAW} show that simple VQC-inspired Q-Networks (i.e., state-action value approximators) based on Double Deep Q-Learning are not adequate for the Atari games, Pong and Breakout. Sanches et al. \cite{sanches2021short} proposed a hybrid quantum-classical policy-based algorithm to solve real-world problems like vehicle routing. In \cite{wu2021quantum}, the authors proposed a variational actor-critic agent, which is the only work so far operating on the quantum-quantum context of QML \cite{Aimeur2007}, i.e., a quantum agent acting upon a quantum environment. The authors suggest that the variational method could solve quantum control problems. Jerbi et al. \cite{jerbi2021variational} propose a novel quantum variational policy-based algorithm achieving better performance than previous value-based methods in a set of standard benchmarking environments. Their architecture consists of repeated angle-encoding to increase the expressivity of the variational model, i.e., increasing the number of functions of the input state that the model can represent \cite{Schuld_2021}. Compared with \cite{jerbi2021variational}, our work shows that a simpler variational architecture composed of a shallow ansatz, consisting of a two-qubit entangling gate and two single-qubit gates \cite{RevModPhys.94.015004} with a single encoding layer can be considered for standard benchmarking environments. Variational policies can be devised with decreased depth and fewer trainable parameters. The type of functions our circuit can represent is substantially smaller when compared to \cite{jerbi2021variational}. However, simpler classes of policies may be beneficial in the language of generalization and overfitting. Furthermore, compared to \cite{jerbi2021variational}, this work considers a more trivial set of observables for the measurement of the quantum circuit, leading to fewer shots necessary to estimate the agent's policy and respective policy gradient.

\section{Policy Gradients}\label{sec: Policy Gradients}

Policy Gradient methods try to learn a parameterized policy $\pi(a\lvert  s,\theta) = \mathbb{P}\{ a_t = a \lvert  s_t = s , \theta_t = \theta \}$, where $\theta \in \mathbb{R}^k$ is the parameter vector of size $k$, $s$ and $a$ are the state and action, respectively, and $t$ is the time instant, that can optimally select actions without resorting to a value function. These methods try to maximize a performance measure $J(\theta)$, performing gradient \textit{ascent} on $J(\theta)$

\begin{equation}
\theta_{i+1} = \theta_i + \eta \nabla_{\theta_i} J(\theta_i)
\end{equation}
\noindent
where $\eta$ is the learning rate. Provided that the action space is discrete and relatively small, then the most prominent way of balancing exploration and exploitation is by sampling an action from a \textit{Softmax-Policy}, also known as Neural Policy \cite{Agarwal2019ReinforcementLT}: 
\begin{equation}
\pi(a \lvert  s,\theta) = \frac{e^{h(s,a,\theta)}}{\sum_{b \in A} e^{h(s,b,\theta)}}
\label{eq: softmax-policy}
\end{equation}
\noindent
where $h(s,a,\theta) \in \mathbb{R}$ is a numerical preference for each state-action pair and $A$ is the action set. For legibility, $A$ will be omitted whenever a policy similar to equation \eqref{eq: softmax-policy} is presented. The policy gradient theorem \cite{Sutton1999PolicyGM} states that the gradient of the objective function can be written as a function of the policy itself. In general, the Monte-Carlo policy gradient known as REINFORCE \cite{Williams2004SimpleSG}, computes the gradient of samples obtained from $N$ trajectories of length $T$, also known as the \textit{horizon} under the parameterized policy, as in Equation \eqref{eq: policy gradient estimator}.
        
\begin{equation}
\nabla_{\theta} J(\theta) = \frac{1}{N} \sum_{i=0}^{N-1}\sum_{t=0}^{T-1} G_t(\tau_i) \nabla_{\theta} \log \pi(a_{t_i}\lvert  s_{t_i},\theta)
\label{eq: policy gradient estimator} 
\end{equation}
\noindent
where $G_t(\tau)$ is the $\gamma$-discounted cumulative reward per time step, known as the \textit{return} (see Equation \eqref{eq: return_t}) derived from trajectory's return $G(\tau)$ (see Equation \eqref{eq: return}).

\noindent\begin{minipage}{.5\linewidth}
\begin{equation}
 G(\tau) = \sum_{t=0}^{T-1} \gamma^t r_{t+1}
 \label{eq: return}
\end{equation}
\end{minipage}%
\begin{minipage}{.5\linewidth}
\begin{equation}
 G_t(\tau) = \sum_{t' = 0}^{T-t-1} \gamma^{t'} r_{t'+t}
 \label{eq: return_t}
\end{equation}
\end{minipage}
\noindent
A known limitation of the REINFORCE algorithm is due to Monte Carlo estimates. Stochastically sampling the trajectories results in gradient estimators with high variance, which deteriorate the performance as the environment's complexity increases \cite{variance_reduction}. The REINFORCE estimator can be improved by leveraging a control variate known as \textit{baseline} $b(s_t)$, without increasing the number of samples $N$. Baselines are subtracted from the return such that the optimization landscape becomes smooth. The REINFORCE with baseline gradient estimator is represented in Equation \eqref{eq: policy gradient baseline}, and the complete algorithm is presented in Algorithm \ref{alg: reinforce}. 

\begin{equation}
      \nabla_{\theta} J(\theta) = \frac{1}{N} \sum_{i=0}^{N-1}\sum_{t=0}^{T-1} (G_t(\tau_i) - b(s_{t_i})) \nabla_{\theta} \log \pi(a_{t_i} \lvert  s_{t_i} , \theta)
    \label{eq: policy gradient baseline}
\end{equation}
\noindent
For the benchmarking environments in Section \ref{sec: performance}, the average return was used as a baseline, calculated as in equation \eqref{eq: baseline}.

    \begin{equation} 
    b(s_t) = {1 \over {N}} \sum_{i=0}^{N-1} G_t(\tau_i)
    \label{eq: baseline}
    \end{equation}

\begin{algorithm}
\caption{REINFORCE with baseline}\label{alg: reinforce}
\begin{algorithmic}
\Require $\theta \in \mathbb{R}^k$ , learning rate $\eta$, horizon $T$

\While{True}
\For{$i=0 \ldots N-1$}

     \State{Following $\pi_{\theta}$, generate trajectory of the form 
     $$\tau_i = \{(s_0,a_0,r_0),\dots ,(s_{T-1},a_{T-1},r_{T-1})\}$$} 
\EndFor

\State{Compute gradient with baseline as in Equation \eqref{eq: policy gradient baseline}}
\State{update parameters via gradient ascent $\theta = \theta + \eta \nabla_{\theta} J(\theta)$}

\EndWhile
\end{algorithmic}
\end{algorithm}



\section{Quantum Policy Gradients}
\label{sec: QPG}

This section details the proposed VQC-based policy gradient. Numerical preferences $h(s, a,\theta) \in \mathbb{R}$ are the output of measurements in a given parameterized quantum circuit. The result can be represented as the expectation value of a given observable or the probability of measuring a basis state. We resort to the former since it allows for more compact representations of objective functions \cite{bharti2021noisy}. Additionally, the type of ansatz used by the proposed VQC implies that $\theta \in \mathbb{R}^k$ is a high dimensional vector corresponding to the angles of arbitrary single-qubit rotations.\\
VQCs are composed of four main building blocks, as represented in \autoref{fig: vqc building blocks}. Initially, a state preparation routine or \textit{embedding}, $S$, encodes data points into the quantum system. Next, a unitary $U(\theta)$ maps the data into higher dimensions of the Hilbert space. Such a parameterized model corresponds to linear methods in quantum feature spaces. Expectation values returned from a measurement scheme are finally post-processed into the quantum neural policy. A careful analysis of each block of Figure \ref{fig: vqc building blocks} follows. Moreover, the sample complexity of estimating the quantum policy gradient is analyzed in Section \ref{sec: gradient estimation}.

\begin{figure}[h]
 \centering
 \includegraphics[width=0.5\textwidth]{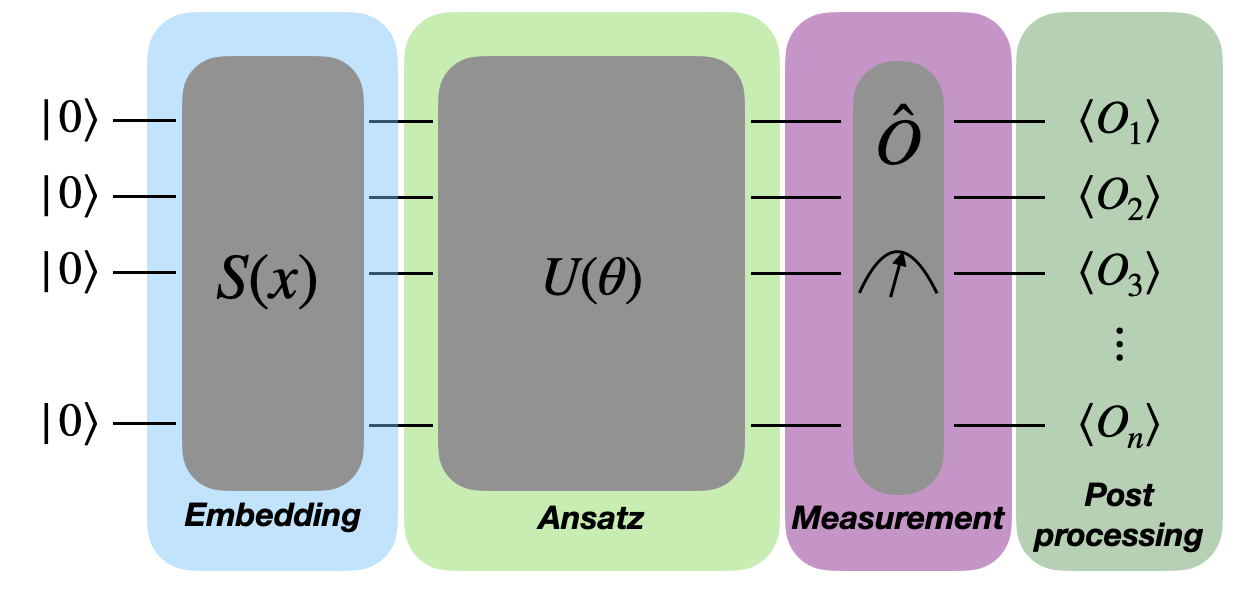}
 \caption{Building blocks of Variational Quantum Circuits.}
 \label{fig: vqc building blocks}
\end{figure}

\subsection{Embedding}\label{subsec: embedding}
Unlike classical algorithms, the state-preparation routine is a crucial step for any variational quantum algorithm. There are numerous ways of encoding classical data into a quantum processor \cite{schuld_ml_book}.
Angle encoding \cite{LaRose2020RobustDE} is used to allow continuous-state spaces. Arbitrary Pauli rotations $\sigma \in \{\sigma_x , \sigma_y , \sigma_z\}$ can encode a single feature per qubit. Hereby, given an agent's state $s$ with $n$ features, $s = \{s_0, s_2, \dots s_{n-1}\}$, $\sigma_x$ rotations are used, requiring $n$ qubits to encode $\ket{s}$, as indicated by Equation \eqref{eq: angle_enc}.

\begin{equation}
 \ket{s} = \bigotimes_{i=0}^{n-1} e^{-j \sigma_x s_i} \ket{b_i}
 \label{eq: angle_enc}
\end{equation}
\noindent
 where $\ket{b_i}$ refers to the $i^{\text{th}}$ qubit of an $n$-qubit register initially in state $\ket{0^n}$(represented w.l.g as $\ket{0}$ from now on). Each feature needs to be normalized such that $s_i \in [-\pi,\pi]$. Since the range of each feature is usually unknown, this work resorts to normalization based on the $L_{\infty}$ norm. The main advantage of angle encoding lies in the simplicity of generating the encoding, given the composition of solely $n$ single-qubit gates, thus giving rise to a circuit of depth 1. In contrast, the main disadvantage is the linear dependence between the number of qubits and the number of features characterizing the agent's state and the poor representational power, at least in principle \cite{Schuld2021QuantumML}.

\subsection{Parameterized model}\label{subsec: ansatz}

To the best of the authors' knowledge, no \textit{problem-inspired} ansatz exploiting the physics behind the problem is known in RL applications. This can be explained by the difficulty of expressing and training RL agent's policies as Hamiltonian-based evolution models \cite{bharti2021noisy}. Moreover, since the goal is to design a NISQ ansatz to capture the agent's optimal policy in different environments, this work uses a parameterized model from the family commonly referred to as \textit{hardware-efficient} ansatze \cite{bharti2021noisy}. Such models behave similarly to a classical feed-forward neural network. The main advantage of this family of ansatze is its versatility, accommodating encoding symmetries and bringing correlated qubits closer for depth reduction \cite{Cerezo_Nature}. The ansatz consists of an alternating-layered architecture composed of single-qubit gates followed by a cascade of entangling gates as pictured in \autoref{fig: ansatz}.

\begin{figure}[h]
 \centering
 \includegraphics[width=0.6\textwidth]{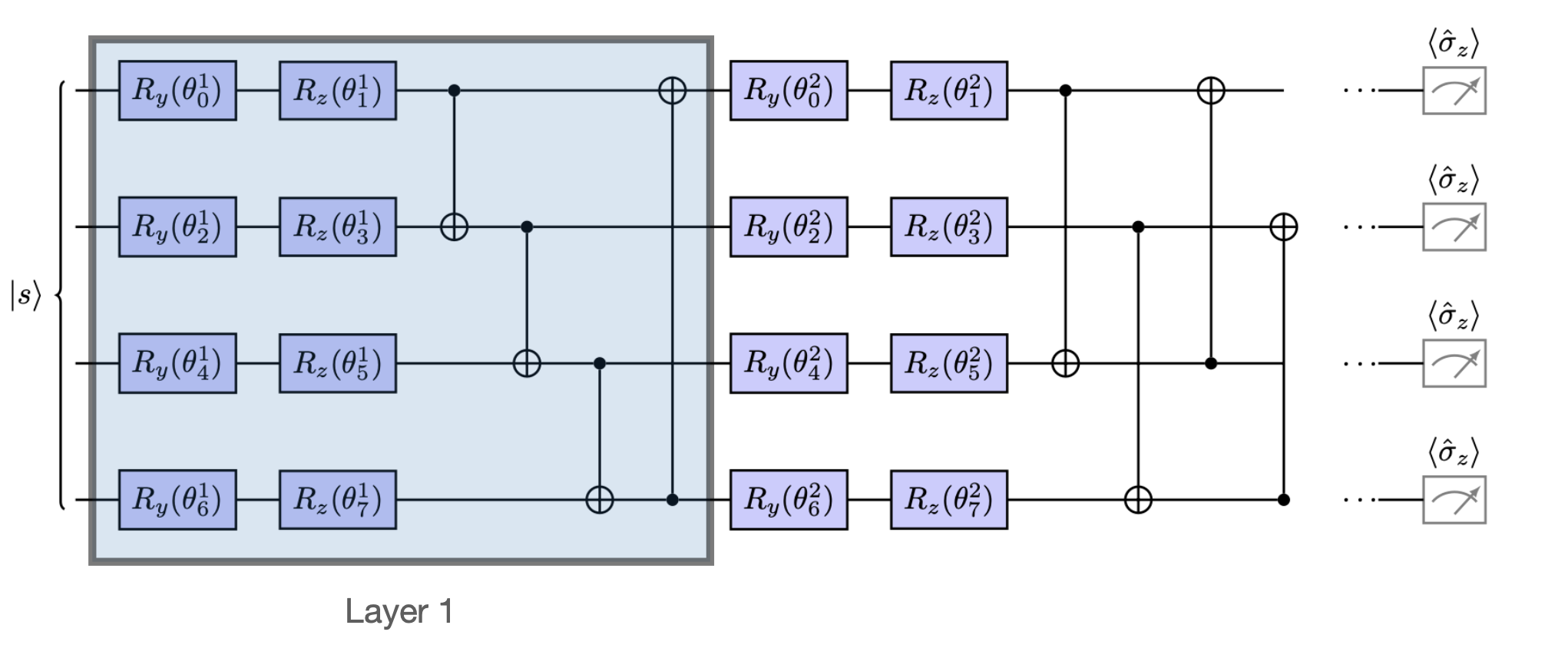}
 \caption{Hardware-efficient ansatz for RL based on single-qubit $R_y, R_z$ rotation gates.}
 \label{fig: ansatz}
\end{figure}
\noindent
A single layer is composed of two single-qubit $\sigma_y, \sigma_z$ rotation gates per qubit, followed by a cascade of entangling gates, such that features are correlated in a highly entangled state. The ansatz includes $2n$ single-qubit rotation gates per layer, each gate parameterized by a given angle. Therefore, there are $2nL$ trainable parameters for $L$ layers. The entangling gates follow a pattern that changes over the number of layers, inspired by the circuit-centric classifier design \cite{Schuld_2021}. The pattern follows a modular arithmetic $\mbox{CNOT}[i, (i+l) \mod n]$ where $i \in [1, \dots ,n]$ and $l \in [1, \dots ,L]$ indexes the layers. Increasing the number of layers increases the correlation between features and expressivity.

\subsection{Measurement}\label{sec: measurements}

An arbitrary state $\ket{\psi} \in \mathbb{C}^{2^n}$ is represented by an arbitrary superposition over the basis states, as in Equation \eqref{eq: basis-states}.

\begin{equation}
    \ket{\psi} = \sum_{i=0}^{2^{n-1}} c_i \ket{\psi_i}
    \label{eq: basis-states}
\end{equation}
\noindent
Measuring the state $\ket{\psi}$ in the computational basis ($\sigma_z$ basis) collapses the superposition into one of the basis states $\ket{\psi_i}$ with probability $\lvert c_i \rvert^2$, as given by the Born rule \cite{chuang_book}. In general, the expectation value of some observable $\hat{O}$, is given by the summation of each possible outcome, i.e., the eigenvalue $\lambda_i$ weighted by its respective probability $p_i = \lvert c_i\rvert^2$ as in Equation \eqref{eq: expectation-observable}.

\begin{equation}
    \langle \hat{O} \rangle = \expval{\hat{O}}{\psi} =  \sum_{i=0}^{2^{n-1}} \lambda_i p_i
    \label{eq: expectation-observable}
\end{equation}
\noindent
Let $\hat{O}$ be the single-qubit $\sigma_z^i$ measurement, applied to the $i^{\text{th}}$-qubit. Given that the $\sigma_z$ eigenvalues are $\{-1,1\}$, the expectation value $\langle \sigma_z^i \rangle$ can be obtained by the probability $p_0$ of the qubit being in the state $\ket{0}$ as $\langle \sigma_z^i \rangle = 2p_0 - 1$. Notice that in practice, $p_0$ needs to be estimated from several circuit repetitions to obtain an accurate estimate of the expectation value.
\noindent
Let the state $\ket{\psi}$ be the quantum state obtained from the encoding of an agent's state via $S(s)$, and the parameterized block $U(\theta)$, as in Sections \ref{subsec: embedding} and \ref{subsec: ansatz} respectively. Let $\langle \sigma_z^i \rangle$ be the quantum analogue of the numerical preference for action $i$, which we represent by $\langle a_i \rangle$ for clarity. Its expectation can be formally described by Equation \eqref{eq: expectation}.

\begin{equation}
 \langle a_i \rangle_{\theta} =  \expval{S(s)^{\dagger}U(\theta)^{\dagger} \sigma_z^{i} U(\theta)S(s)}{0}
 \label{eq: expectation}
\end{equation}
\noindent
For a policy with $\lvert A\rvert$ possible actions, each $\sigma_z$ measurement corresponds to the numerical preference of each action. Thus, $\lvert A\rvert$ single-qubit estimated expectation values are needed. If the number of features in the agent's state is larger than the number of actions, the single-qubit measurements occur only on a subset of qubits. Such measurement scheme is qubit-efficient \cite{schuld_ml_book}. 
Figure \ref{fig: vqc_4_measures} represents the full VQC for an environment with four feature states and four actions with three parameterized layers.
\begin{figure}[h]
 \centering
 \includegraphics[width=\textwidth]{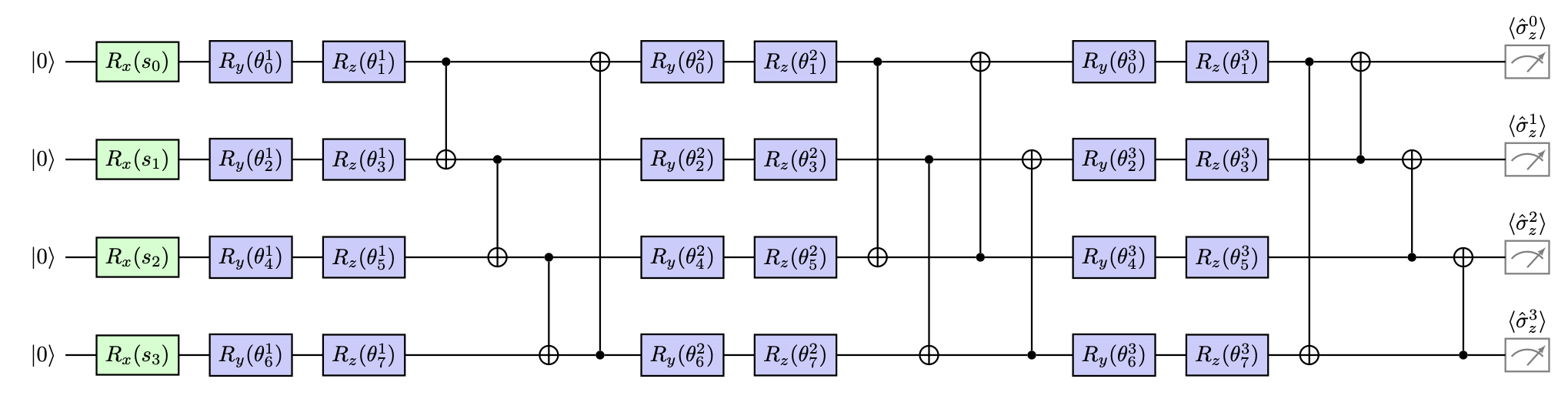}
 \caption{Variational Quantum Circuit for Policy-based RL with three parameterized layers.}
 \label{fig: vqc_4_measures}
\end{figure}

\subsection{Classical Post-processing}\label{sec: post_process}
Measurement outcomes representing numerical preferences $h(s, a,\theta)=\langle a \rangle_{\theta}$ are classically post-processed to convert the estimated expectation values to the final quantum neural policy, as given by Equation \eqref{eq: quantum_softmax_policy}. 

\begin{equation}
 \pi(a\mid s,\theta) = {{e^{\langle a \rangle_{\theta}}} \over {\sum_{b} e^{\langle b \rangle_{\theta}}}}
 \label{eq: quantum_softmax_policy}
\end{equation}
\noindent
Equation \eqref{eq: quantum_softmax_policy} imposes an upper bound on the greediness of $\pi$. It will always allow for exploratory behavior, which can negatively impact the performance of RL agents, especially in deterministic environments. As an example, consider a 2-action environment with

\begin{equation*}
 \pi = [\pi(a_0\mid s,\theta) , \pi(a_1\mid s,\theta)]
 \label{eq: pi}
\end{equation*}
The entries of $\pi$ are given by Equation \eqref{eq: quantum_softmax_policy} and the actions' estimated expectation values $[\langle a_0 \rangle_{\theta}, \langle a_1 \rangle_{\theta}]$. As these are bounded as $\langle \sigma_z \rangle \in [-1,1]$, the maximum difference between action preferences occurs when the estimated vector is $[\langle a_0 \rangle_{\theta} = -1, \langle a_1 \rangle_{\theta} = 1]$. The corresponding softmax normalized vector is:

\begin{equation*}
 \pi_a = [\pi(a_0\mid s,\theta) , \pi(a_1\mid s,\theta)] = [0.88,0.12]
\end{equation*}
\noindent
In this case, the policy always has a $\sim 0.1$ probability of selecting the worst action; the same \textit{rationale} applies to larger action sets. Thus, a trainable parameter $\beta$ is added to the quantum neural policy as in Equation \eqref{eq: beta-policy}:

\begin{equation}
 \pi(a\lvert s,\theta) = {{e^{\beta\langle a \rangle_{\theta}}} \over {\sum_{b} e^{\beta\langle b \rangle_{\theta}}}} 
 \label{eq: beta-policy}
\end{equation}
$\beta$ has the effect of scaling the output values from the quantum circuit measurements, resembling an energy-based model. Instead of decreasing $\beta$ over time, we treat it as a hyperparameter to be tuned along with $\theta$. The optimization sets $\beta$, assuring convergence towards the optimal policy.

\subsection{Gradient Estimation}\label{sec: gradient estimation}

This section develops upper bounds on both the number of samples and the number of circuit evaluations necessary to obtain an $\epsilon$-approximation of the policy gradient, as given by Equation \eqref{eq: policy gradient estimator}, restated here for completion:
\begin{equation*}
\nabla_{\theta} J(\theta) = \frac{1}{N} \sum_{i=0}^{N-1}\sum_{t=0}^{T-1} G_t(\tau_i) \nabla_{\theta} \log \pi(a_{t_i}\lvert  s_{t_i},\theta)
\end{equation*}
\noindent
The gradient $\nabla_{\theta} J(\theta)$ can be estimated using the same quantum device that computes expectations $\langle a_i \rangle_{\theta}$, via parameter-shift rules \cite{Schuld2019EvaluatingAG}. These rules require the policy gradient to be framed as a function of gradients of observables, as given by Equation \eqref{eq: log_policy_expansion-LPS}.

\begin{equation}
 \nabla_{\theta} \log \pi(a\lvert s,\theta) \;=\; \beta \left(\nabla_{\theta} \langle a \rangle_{\theta} - {\sum_b \pi(b\lvert s,\theta) \nabla_{\theta} \langle b \rangle_{\theta}}\right)
 \label{eq: log_policy_expansion-LPS}
\end{equation}
\noindent
By combining equations \eqref{eq: policy gradient estimator} and \eqref{eq: log_policy_expansion-LPS}, the quantum policy gradient estimator is given by Equation \eqref{eq: quantum policy gradient-LPS}: 
 
 \begin{equation}
     \nabla_{\theta} J(\theta) \;=\; \frac{1}{N} \sum_{i=0}^{N-1}\sum_{t=0}^{T-1} G_t(\tau_i) \beta \left(\nabla_{\theta} \langle a_{t_i} \rangle_{\theta} - {\sum_{b_{t_i}} \pi(b_{t_i}\mid s_{t_i},\theta) \nabla_{\theta} \langle b_{t_i} \rangle_{\theta}}\right)
 \label{eq: quantum policy gradient-LPS}
 \end{equation}
\noindent
The number of samples associated with Equation \eqref{eq: quantum policy gradient-LPS} is defined as the number of visited states. Since there are $N$ trajectories (sequences of actions, $\tau_i$), each visiting $T$ states, the total number of samples is $\mathcal{O}(NT)$.
\noindent
Lemma \ref{lemma: sample complexity-LPS} provides an upper bound for $N$ such that the policy gradient is $\epsilon_{\nabla}$-approximated with probability $1-\delta_{\nabla}$. 
 

 %
\begin{restatable}[$\epsilon_{\nabla}$-approximation of the policy-gradient]{lemma}{sample}
 \label{lemma: sample complexity-LPS}
 Let $\theta \in \mathbb{R}^k$, $k$ being the number of parameters, $R_{max}$ be the maximum possible reward in any time step, $T$ the horizon, and $\nabla_{\theta} J(\theta)$ the expected policy gradient. The policy gradient, $\hat{\nabla}_{\theta} J(\theta)$, can be $\epsilon_{\nabla}$-approximated, with probability $1-\delta_{\nabla}$

 \begin{equation}
 \lvert  \hat{\nabla}_{\theta} J(\theta) - \nabla_{\theta} J(\theta) \rvert  \;\leq\; \epsilon_{\nabla}
 \end{equation}
 \noindent
 using a number of samples given by
 
 \begin{equation}
 NT \approx \mathcal{O}\left( \frac{8\beta^2 R_{max}^2 T^3}{\epsilon_{\nabla}^2 (\gamma - 1)^4} \log ({2k \over \delta_{\nabla}}) \right)
 \end{equation}
 
\end{restatable}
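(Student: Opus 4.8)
The plan is to recognize the estimator $\hat{\nabla}_{\theta} J(\theta)$ as an empirical average over $N$ independent trajectories and apply a Hoeffding-type concentration bound coordinatewise, followed by a union bound over the $k$ parameters. First I would write $\hat{\nabla}_{\theta} J(\theta)=\frac{1}{N}\sum_{i=0}^{N-1}X_i$ with $X_i=\sum_{t=0}^{T-1}G_t(\tau_i)\,\nabla_{\theta}\log\pi(a_{t_i}\mid s_{t_i},\theta)\in\mathbb{R}^k$. Since the trajectories are generated i.i.d. under the fixed policy $\pi_\theta$, the vectors $X_i$ are i.i.d. with mean equal to the exact gradient $\nabla_{\theta} J(\theta)$, so the task reduces to controlling the fluctuation of a sample mean of bounded random vectors.

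The core step is a uniform bound on each coordinate of $X_i$. For the return, $|G_t(\tau)|\le R_{max}\sum_{t'\ge0}\gamma^{t'}=R_{max}/(1-\gamma)$ by the geometric series, and a careful accounting of the nested return-and-horizon summation supplies a second factor $1/(1-\gamma)$. For the score function I would invoke Eq.~\eqref{eq: log_policy_expansion-LPS}: each partial derivative $\partial_{\theta_j}\langle a\rangle_{\theta}$ is computable by a parameter-shift rule as a half-difference of two $\sigma_z$ expectation values, hence $|\partial_{\theta_j}\langle a\rangle_{\theta}|\le1$ because the $\sigma_z$ spectrum is $\{-1,1\}$; the convex-combination term $\sum_b\pi(b\mid s,\theta)\,\partial_{\theta_j}\langle b\rangle_{\theta}$ is likewise bounded by $1$ since $\sum_b\pi(b\mid s,\theta)=1$, so $|(\nabla_{\theta}\log\pi)_j|\le2\beta$. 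Combining these, every coordinate satisfies $|X_{i,j}|\le B$ with $B=\mathcal{O}\!\big(\beta R_{max}T/(1-\gamma)^2\big)$.

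With $X_{i,j}\in[-B,B]$, Hoeffding's inequality gives $\Pr[\,|\hat{\nabla}_j-\nabla_j|\ge\epsilon_{\nabla}\,]\le2\exp(-N\epsilon_{\nabla}^2/2B^2)$ for each coordinate $j$. A union bound over the $k$ coordinates forces the per-coordinate failure probability down to $\delta_{\nabla}/k$, which is exactly what produces the $\log(2k/\delta_{\nabla})$ term; solving $2\exp(-N\epsilon_{\nabla}^2/2B^2)\le\delta_{\nabla}/k$ for $N$ yields $N=\mathcal{O}\big(B^2\epsilon_{\nabla}^{-2}\log(2k/\delta_{\nabla})\big)$. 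Substituting $B^2=\mathcal{O}\big(\beta^2R_{max}^2T^2/(1-\gamma)^4\big)$ and multiplying by $T$, the deterministic per-trajectory sample cost, reproduces the advertised $NT=\mathcal{O}\big(8\beta^2R_{max}^2T^3\epsilon_{\nabla}^{-2}(\gamma-1)^{-4}\log(2k/\delta_{\nabla})\big)$, using $(\gamma-1)^4=(1-\gamma)^4$.

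The main obstacle is the uniform coordinate bound $B$, and two points within it deserve care. First, the score-function bound: one must show that differentiating the softmax through Eq.~\eqref{eq: log_policy_expansion-LPS} keeps every coordinate bounded by $2\beta$ uniformly in $s$, $a$ and $\theta$, which hinges on the boundedness of the $\sigma_z$ expectation-value derivatives (parameter-shift rule together with the $\{-1,1\}$ spectrum) and on $\sum_b\pi(b\mid s,\theta)=1$ collapsing the convex-combination term. Second, getting the stated power of $(1-\gamma)$ right: one geometric factor $1/(1-\gamma)$ is unavoidable from the discounted return, while the second factor arises from the nested summation over the horizon, so that squaring $B$ in Hoeffding delivers the $(\gamma-1)^{4}$ in the statement. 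A final, more elementary subtlety is that the concentration argument runs over the $N$ independent trajectories, whereas the reported budget is $NT$; the extra factor $T$ must be carried through as the deterministic cost of realizing each length-$T$ trajectory rather than as additional independent randomness.
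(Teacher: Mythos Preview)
Your proposal is correct and follows essentially the same route as the paper: bound each coordinate of the per-trajectory summand by $B=\mathcal{O}\big(\beta R_{max}T/(1-\gamma)^2\big)$ using the parameter-shift bound $|\partial_{\theta_j}\langle a\rangle_\theta|\le1$ together with the geometric-series bound on the discounted return, then apply Hoeffding's inequality across the $N$ i.i.d.\ trajectories, union-bound over the $k$ coordinates, and finally multiply by $T$. The only cosmetic difference is that the paper obtains the second $1/(1-\gamma)$ factor by an explicit (and deliberately loose) upper bound $\sum_{t}G_t(\tau)\le R_{max}T/(\gamma-1)^2$ rather than the ``nested summation'' heuristic you describe, but the resulting constants and structure are identical.
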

\noindent
The most relevant insight drawn from Lemma \ref{lemma: sample complexity-LPS} is that it establishes that for obtaining an $\epsilon_{\nabla}$-approximated policy gradient, the algorithm needs a number of samples that grows logarithmically with the total number of parameters. The proof of Lemma \ref{lemma: sample complexity-LPS} is presented in detail in Appendix \ref{appendix: gradient_estimation_LPS}.
\\
Gradient-based optimization can be performed using the same quantum device that computes expectations $\langle a_i \rangle_{\theta}$, via parameter-shift rules \cite{Sweke_2020,Schuld2019EvaluatingAG}, which compute the gradient of an observable w.r.t a single variational parameter concerning rotation angles of quantum gates. Parameter-shift rules are given by Equation \eqref{eq: parameter-shift rules-LPS}:
\begin{equation}
\nabla_{\theta_i} \langle a_i \rangle_{\theta} = \frac{1}{2} \left[ \langle a_i \rangle_{\theta + \frac{\pi}{2}} - \langle a_i \rangle_{\theta - \frac{\pi}{2}} \right]
\label{eq: parameter-shift rules-LPS}
\end{equation}
\noindent
The gradient's accuracy depends on the expectation values, $\langle a \rangle_{\theta}$. These are estimated for each sample and action using several repetitions of the quantum circuit or shots. Lemma \ref{lemma: query complexity-LPS} establishes an upper bound on the total number of shots required to reach an $\epsilon_{\langle \rangle}$-approximated policy gradient, with probability $1-\delta_{\langle \rangle}$.
 
  \begin{restatable}[Total number of quantum circuit evaluations]{lemma}{query}
 \label{lemma: query complexity-LPS}
 Let $\theta \in \mathbb{R}^k$, $\mathcal{O}(NT)$ be the sample complexity given by Lemma \ref{lemma: sample complexity-LPS}, and $\lvert A \rvert$ the number of available actions. With probability $1-\delta_{\langle \rangle}$ and approximation error $\epsilon_{\langle \rangle}$, the quantum policy gradient algorithm requires a number of shots given by
 
 \begin{equation}
 \mathcal{O}\left( \frac{\lvert A \rvert NT}{\epsilon_{\langle \rangle}^2}  \log (\frac{2k}{\delta_{\langle \rangle}})\right)
 \end{equation}
\end{restatable}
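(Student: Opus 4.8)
The plan is to reduce the shot-count bound to a concentration-of-measure argument for bounded random variables, combined with a union bound over the $k$ variational parameters. First I would use the measurement scheme of Section \ref{sec: measurements}: each repetition of the circuit on qubit $i$ returns an outcome in $\{-1,+1\}$, so the empirical estimate $\hat{\langle a \rangle}_{\theta}$ of a single action preference is the sample mean of $M$ i.i.d.\ random variables valued in $[-1,1]$. Hoeffding's inequality then gives $\mathbb{P}(\lvert \hat{\langle a \rangle}_{\theta} - \langle a \rangle_{\theta}\rvert \geq \epsilon_{\langle \rangle}) \leq 2\exp(-M\epsilon_{\langle \rangle}^2/2)$, so a single expectation value is estimated within $\epsilon_{\langle \rangle}$ with failure probability at most $\delta'$ using $M = \mathcal{O}(\epsilon_{\langle \rangle}^{-2}\log(1/\delta'))$ shots.

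Next I would account for the fact that the object we ultimately need is the gradient vector in $\mathbb{R}^k$. By the parameter-shift rule of Equation \eqref{eq: parameter-shift rules-LPS}, each partial derivative $\nabla_{\theta_j}\langle a \rangle_{\theta}$ is a $\tfrac12$-scaled difference of two shifted expectation values, and the log-policy gradient of Equation \eqref{eq: log_policy_expansion-LPS} is a softmax-weighted combination of such terms with weights $\pi(b\lvert s,\theta)\in[0,1]$ summing to one. Consequently an $\epsilon_{\langle \rangle}$-accurate estimate of every constituent expectation value propagates, through the bounded $\tfrac12$ prefactor and the convex softmax weights, to an $\mathcal{O}(\epsilon_{\langle \rangle})$-accurate estimate of each observable-gradient component $\nabla_{\theta_j}\langle a \rangle_{\theta}$ with no amplification. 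Here I would explicitly target the accuracy of the observable gradients themselves (hence the subscript $\langle \rangle$), deferring the $\beta$- and return-dependent factors to the composition with Lemma \ref{lemma: sample complexity-LPS}; this is why neither $\beta$ nor $R_{max}$ enters the present bound. To guarantee all $k$ components are simultaneously accurate I would take a union bound over the $k$ parameters, requiring $2k\exp(-M\epsilon_{\langle \rangle}^2/2)\leq\delta_{\langle \rangle}$, which yields the per-estimate shot count $M = \mathcal{O}(\epsilon_{\langle \rangle}^{-2}\log(2k/\delta_{\langle \rangle}))$ and reproduces the $\log(2k/\delta_{\langle \rangle})$ factor of the statement exactly.

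Finally I would multiply $M$ by the number of expectation-value estimates. For each of the $\mathcal{O}(NT)$ visited states supplied by Lemma \ref{lemma: sample complexity-LPS}, the estimator of Equation \eqref{eq: quantum policy gradient-LPS} requires the preference of every action, giving $\lvert A\rvert$ expectation values per sample; estimating each from its own batch of shots produces the total $\mathcal{O}\!\left(\lvert A\rvert NT\,\epsilon_{\langle \rangle}^{-2}\log(2k/\delta_{\langle \rangle})\right)$. I expect the main obstacle to be the bookkeeping of the error-propagation and union-bound steps: one must verify that the composition of the parameter-shift differences and the softmax reweighting contributes only bounded multiplicative constants, so that $k$ enters solely through the union bound and never as a prefactor, and one must be explicit about the conservative choice to estimate the $\lvert A\rvert$ action preferences from separate shot batches rather than exploiting the mutual commutativity of the single-qubit $\sigma_z^i$ observables, since it is precisely this modeling decision that fixes the $\lvert A\rvert$ factor appearing out front.
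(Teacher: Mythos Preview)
Your proposal is correct and follows essentially the same route as the paper: Hoeffding's inequality on the bounded $\sigma_z$ outcomes, a union bound over the $k$ parameters to obtain the $\log(2k/\delta_{\langle\rangle})$ factor, the constant-factor overhead from the two parameter-shift evaluations, and finally multiplication by $\lvert A\rvert NT$. Your treatment of the error propagation through the $\tfrac12$-scaled difference and the convex softmax weights is actually more explicit than the paper's, which simply asserts the union-bound step without detailing why the softmax combination does not amplify the per-observable error.
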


 \noindent
 Similarly to Lemma \ref{lemma: sample complexity-LPS}, it is shown that the accuracy of the policy gradient, as a function on the total number of shots, grows logarithmically with the total number of parameters. The proof of Lemma \ref{lemma: query complexity-LPS} is presented in detail in Appendix \ref{appendix: query_complexity_LPS}.

\section{Performance in simulated environments}\label{sec: performance}

This section examines the performance of the proposed quantum policy gradient through standard benchmarking environments from the OpenAI Gym library \cite{brockman2016openai}. Moreover, the quantum policy gradient was also tested in a handcrafted quantum control environment. In this setting, a quantum agent was designed to learn to prepare the state $\ket{1}$ with high fidelity, starting from the ground state $\ket{0}$. The empirical reward over the number of episodes was used to discern the performance of both classical and quantum models. The best-performing classical neural network was selected from a restrictive set of networks composed of at most two hidden linear layers. All quantum circuits were built using the Pennylane library \cite{bergholm2020pennylane} and trained using the PyTorch automatic differentiation backend \cite{paszke2017automatic} to be directly compared with classical models built with the same library. All training instances used the most common classical optimizer, ADAM \cite{kingma2017adam}.

\subsection{Numerical Experiments}\label{sec: benchmarking}
		The CartPole-v0 and Acrobot-v1 environments were selected as classic benchmarks. They have a continuous state space with a relatively small feature space (2 to 6 features) and discrete action space (2 to 3 possible actions). The reward function is similar to each environment. In Cartpole, the agent receives a reward of $+1$ at every time step. The more time the agent keeps the pole from falling, the more reward it gets. In Acrobot, the agent receives a $-1$ reward at every time step and reward $0$ once it gets to the goal state. Thus, Acrobot will be harder to master since, for the Cartpole, every action has an immediate effect as opposed to Acrobot.\\
		In the quantum control environment of state preparation, which we refer to as QControl on this point onward, for simplicity, the mapping $\ket{0} \mapsto \ket{1}$ can be characterized by a time-dependent Hamiltonian $H(t)$ of the form of Equation \eqref{eq: time_dep_hamiltonian} describing the quantum environment as in \cite{Zhang_2019}.

 \begin{equation}
 H(t) = 4J(t)\sigma_z + h\sigma_x
 \label{eq: time_dep_hamiltonian}
 \end{equation}
\noindent
Where $h$ represents the single-qubit energy gap between tunable control fields, considered a constant energy unit. $J(t)$ represents the dynamical pulses controlled by the RL quantum agent in a model-free setting. The learning procedure defines a fixed number of steps $N = 10$, from which the RL agent must be able to create the desired quantum state. The quantum environment prepares the state associated with the time step $t+1$, given the gate-based Hamiltonian at time step $t$, $U(t)$:
\begin{equation}
 \ket{\psi_{t+1}} = U(t)\ket{\psi}
 \label{eq: evolution}
 \end{equation}
\noindent
The reward function is naturally represented as the fidelity between the target state $\ket{\psi_T} = \ket{1}$ and the prepared state $\ket{\psi_t}$ naturally serves as the reward $r_t$ for the agent at time step $t$, as in Equation \eqref{eq: fidelity}.
 
 \begin{equation}
 r_t = {\lvert \langle \psi_{t} \lvert  \psi_{T} \rangle \rvert }^2
 \label{eq: fidelity}
 \end{equation}
\noindent
 Using the policy gradient algorithm of Section \ref{sec: QPG}, the goal is to learn how to maximize fidelity. Figure \ref{fig: q_control} depicts the agent-environment interface.

 \begin{figure}[h]
 \centering
 \includegraphics[width=0.6\textwidth]{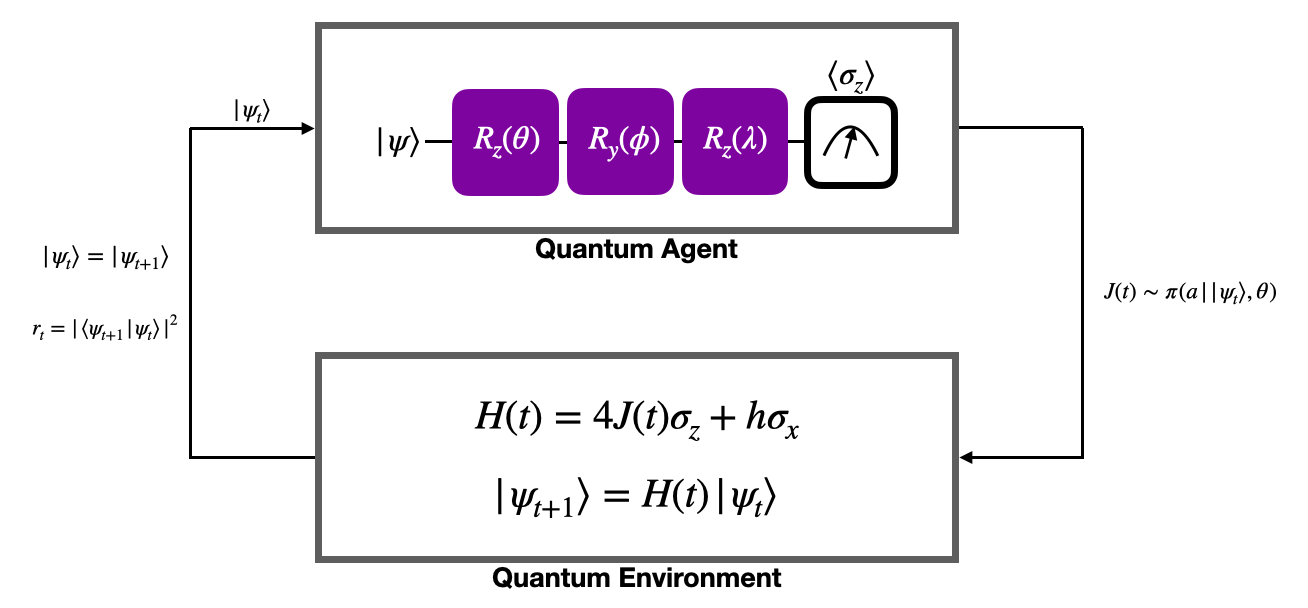}
 \caption{Agent-Environment interface for quantum control.}
 \label{fig: q_control}
 \end{figure}
\noindent
 Each sequence of $N$ pulses corresponds to an episode. The quantum agent should learn the optimal pulse sequence that maps to the state with maximum fidelity as the number of episodes increases. The quantum variational architecture selected was the same as described in Section \ref{sec: QPG}. In this setting, the main difference is the lack of encoding. The quantum agent receives the quantum state from the corresponding time-step Hamiltonian applied at each time step. However, since the environment is simulated, the qubit is prepared in the state of time step $t$ and then fed to the variational quantum policy. In this setting, it is considered the binary action-space $A = [0,1]$(apply pulse $A=1$ or not, $A=0$). A sequence of $N$ actions corresponds to $N$ pulses. A performance comparison is made relative to classical policy gradients. In this case, the corresponding state vector associated with the qubit was explicitly encoded at each time step, considering both real and imaginary components. 
All environment specifications are presented in Table \ref{tab: envs}.

\begin{table}[!ht]
\centering
\vspace{10pt}
\begin{tabular}{cccccc}
\hline
\multirow{2}{*}{\bf Environment} & \multirow{2}{*}{\bf \#F}     & \multirow{2}{*}{\bf \#A}     & {\bf Reward}  & {\bf Max \#s}  & {\bf Terminal}    \\
 &      & & (per step)     & (per episode) & {\bf states}   \\
\hline
\multirow{3}{*}{CartPole-v0} & \multirow{3}{*}{4} & \multirow{3}{*}{2} & \multirow{3}{*}{1}  & \multirow{3}{*}{200} & Out of bounds \\
 & & & & & or reward 200 \\
 & & & & & or below horizontal line \\ \hline
Acrobot-v1 & 6 & 3 & -1  & 500 & 500 steps\\ \hline
QControl & 4 & 2 & $\lvert \bra{\psi_t} \ket{\psi} \rvert^2$ & 10 & $\lvert \bra{\psi_t} \ket{\psi} \rvert^2 \leq 10^{-4}$ \\& & & & & or 10 steps\\ \hline
\end{tabular}
\vspace{10pt}
\caption{Description of the environments ({\footnotesize \#F: number of features; \#A: number of actions; Max \#s: maximum steps}).}
\vspace{-3pt}
\label{tab: envs}
\end{table}

 
 \noindent
 Several neural network architectures were tested for the CartPole-v0 and Acrobot-v1 environments. However, the structure is the same. Every neural network is composed of fully connected layers using a rectified linear unit (ReLU) activation function in every neuron. The output layer is the only layer that does not have ReLU activation. The depth, the total number of trainable parameters, and the existence of dropout differs from network to network. All the networks using dropout have a probability equal to 0.2. Every network was trained with an ADAM optimizer with an experimentally fine-tuned learning rate of 0.01. Figures \ref{fig: nn-init-experiments}(a) and \ref{fig: nn-init-experiments}(b) illustrate the average reward for different classical network configurations for the benchmarking environments. The results show that a fully connected neural network with a single layer of 128 and 32 neurons performs reasonably better than similar architectures for the CartPole-v0 and Acrobot-v1 environments, respectively.\\
 In the QControl environment, eight different neural networks were tested with a single hidden layer. Since the optimal neural network for this problem is still an open question, to the best of the author's knowledge, it was decided to successively increase the size of the network until it solves the task of comparing the minimum viable network with the VQC. For this set of classical architectures, the neural network with a single layer of 16 neurons was chosen since it achieves the best average fidelity as the minimum viable network solving the problem, as illustrated in Figure \ref{fig: nn-init-experiments}(c). 

\begin{figure}[!htb]
    \centering
    \includegraphics[width=\textwidth]{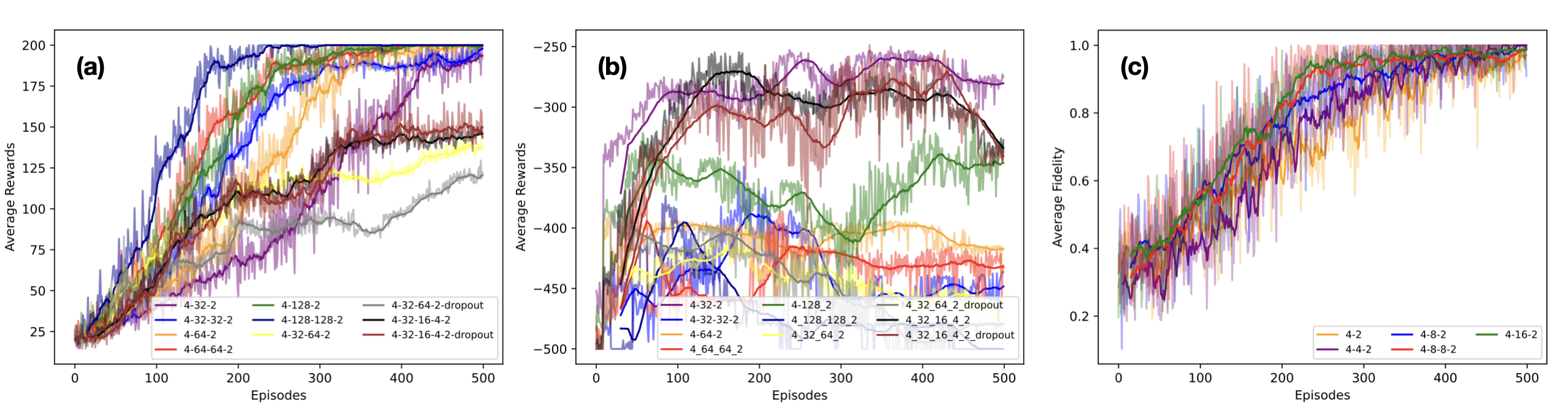}
    \caption{Different classical neural network architectures used in the three simulated environments. Panels (a), (b), and (c) represent different architectures for the Cartpole, Acrobot, and QControl environments, respectively. Each label indicates the respective network structure and if it uses dropout. Each label represents the total number of neurons in each input, hidden, and output layer. E.g., $4-4-4$ has input, hidden, and output layers with four neurons each.}
    \label{fig: nn-init-experiments}
\end{figure}
\noindent
 The second step compares the performance of the quantum neural policy of Section \ref{sec: QPG} against the aforementioned classical architecture. Increasing the number of layers in the parameterized quantum model would perhaps increase the expressivity of the model \cite{Schuld2021QuantumML}. At the same time, increasing the number of layers leads to more complex optimization tasks, given that more parameters need to be optimized. For some variational architectures, there is a threshold for expressivity in terms of the number of layers \cite{Sim_2019}. We encountered precisely this in practice. For Cartpole, the expressivity of the quantum neural policy saturates after three layers, and for the Acrobot, after four layers. From there on, the agent's performance deteriorated rather than improved.
 For the QControl environment, the classical NN was compared with a simplified version of the variational softmax policy. In this case, it was considered a VQC with the most general gate with three parameters that can approximately prepare every single-qubit state. The observables for the numerical action preference are the opposite sign computational basis measurement, i.e., $[\langle \sigma_z \rangle, -\langle \sigma_z \rangle]$. In every environment, the model's learning rate was fine-tuned by trial and error as opposed to $\beta$, which was randomly initialized. The optimal configuration for the learning rate, number of layers, and batch size used to compare are presented in table \ref{tab: config}.
 
\begin{table}[ht]
\centering
\begin{tabular}{ccccc}
\hline
\textbf{Environment} & \textbf{Policy} & \textbf{Learning rate} & \textbf{\#Layers} & \textbf{Batch size} \\ \hline
CartPole-v0          & Quantum          & 0.1                    & 3                 & 10                  \\ \hline
CartPole-v0          & Classical        & 0.01                    & -                 & 10                  \\ \hline
Acrobot-v1           & Quantum          & 0.1                    & 4                 & 10                  \\ \hline
Acrobot-v1           & Classical     & 0.01                    & -                 & 10                  \\ \hline
QControl             & Quantum          & 0.01                   & 1                 & 10                  \\ \hline
QControl           & Classical     & 0.01                    & -                 & 10                  \\ \hline
\end{tabular}
\vspace{10pt}
\caption{Specification for hyperparameter, number of layers, and batch size used for the classical and quantum neural policies in the three simulated environments.}      
\label{tab: config}
\end{table}
\noindent
 Figures \ref{fig: avg_reward_experiments}(a), \ref{fig: avg_reward_experiments}(b) and \ref{fig: avg_reward_experiments}(c) compare the average cumulative reward through several episodes for quantum and classical neural policies for the Cartpole, Acrobot, and QControl environments, respectively. A running mean was plotted to smooth the reward curves since the policy and environments are noisy. Figure \ref{fig: avg_reward_experiments}(c) also plots the respective control trajectory obtained by the variational quantum policy.

\begin{figure}[!htb]
    \centering
    \includegraphics[width=\textwidth]{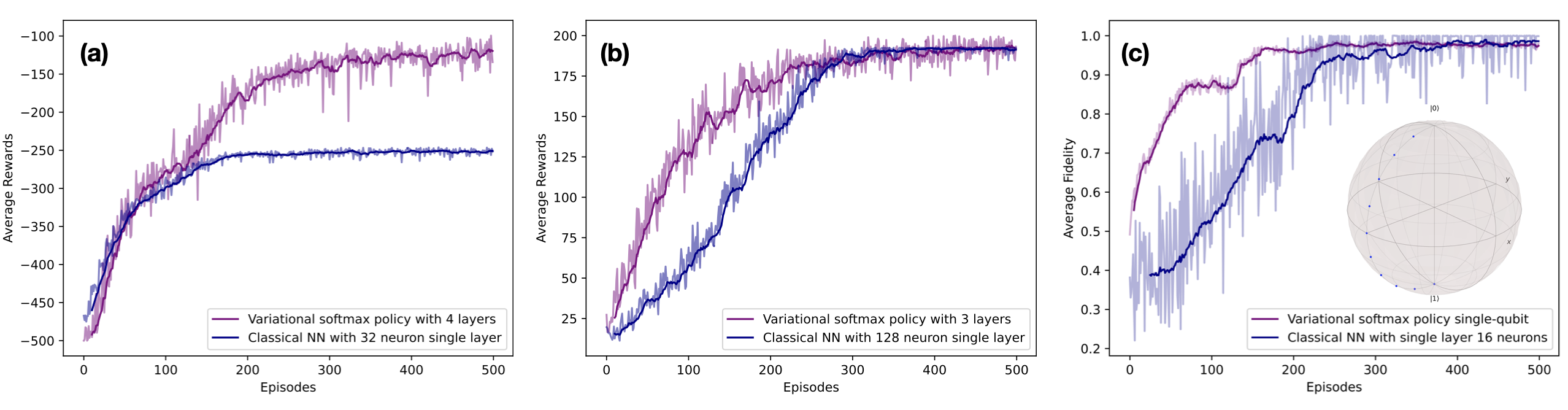}
    \caption{Average cumulative reward. Comparison between the variational softmax policy and the respective classical NN. Panels (a), (b), and (c) represent the average reward comparison for the Cartpole, Acrobot, and QControl environments, respectively.}
    \label{fig: avg_reward_experiments}
\end{figure}
\noindent
 One can conclude that the quantum and classical neural policies perform similarly in every environment. In the QControl environment, the classical policy achieves a slightly greater cumulative reward. Nonetheless, there is clear evidence that the quantum-inspired policy needs fewer interactions with the environment to converge to near-optimal behavior. Moreover, the total number of trainable parameters for the quantum and classical models is summarized in the Table \ref{tab: parameters}. The input layer of a classical neural network is related to the number of qubits in a quantum circuit. Furthermore, we take the number of layers in the VQC as the number of hidden layers in a classical neural network. Given that the quantum circuit is unitary, the number of neurons in a quantum neural network is constant, i.e., equal to the system's number of qubits. Thus, one can conclude that the quantum policy has similar or even outperforming behavior compared to the classical policy with an extremely reduced total number of trainable parameters.\\

\begin{table}[!ht]
\centering
\begin{tabular}{cccccccc}
\hline
\textbf{Env} & \textbf{Policy} & \textbf{I} & \textbf{O} & \textbf{\#N} & \textbf{\#R} & \textbf{$\beta$} & \textbf{\#P} \\ \hline
CartPole-v0          & Quantum         & 4                    & 2                     & ---              & 2                            & Yes              & 25                    \\ \hline
CartPole-v0          & Classical       & 4                    & 2                     & 128               & ---                         & No               & 768                   \\ \hline
Acrobot-v1           & Quantum         & 6                    & 3                     & ---              & 2                            & Yes              & 33                    \\ \hline
Acrobot-v1           & Classical       & 6                    & 3                     & 32               & ---                          & No               & 288                   \\ \hline
QControl           & Quantum         & 1                    & 1                     & ---              & 3                            & Yes              & 3                    \\ \hline
QControl           & Classical       & 4                    & 2                     & 16               & ---                          & No               & 96                   \\ \hline
\end{tabular}
\vspace{10pt}
\caption{Number of parameters trained for both environments ({\footnotesize \textbf{Env}: environment; \textbf{I}: Input layer; \textbf{O}: Output layer; \textbf{\#N}: neurons; \textbf{\#R}: rotations per qubit; \textbf{\#P}: parameters}).}
\label{tab: parameters}
\end{table}

\subsection{The effect of initialization}\label{subsec: initialization}

	The parameters' initialization strategy can dramatically improve the convergence of a machine learning algorithm. Random initialization is often used to break the symmetry between different neurons \cite{dl_bengio}. However, if the parameters are arbitrarily large, the activation function may saturate, difficulting the learning task. Therefore, parameters are often drawn from specific distributions. For instance, the Glorot  \cite{Glorot2010UnderstandingTD} initialization strategy is among the most commonly used to balance initialization and regularization \cite{dl_bengio}.
\noindent
In quantum machine learning models, the problem persists. However, it was verified experimentally that the Glorot initialization has a slight advantage compared to other strategies. The empirical results reported in Section \ref{sec: benchmarking} were obtained using such a strategy. 
\noindent
The Glorot strategy samples the parameters of the network from a normal distribution $\mathcal{N}(0,std^2)$ with standard deviation given by Equation \eqref{eq: glorot}:
   
	\begin{equation}
	std = gain * \sqrt{6 \over {n_{in} + n_{out}}} 
	\label{eq: glorot}
	\end{equation}
\noindent
where \textit{gain} is a constant multiplicative factor. $n_{in}$ and $n_{out}$ are the number of inputs and outputs of a layer, respectively. It was devised to initialize all layers with approximately the same activation and gradient variance, assuming that the neural network does not have nonlinear activations, being thus reducible to a chain of matrix multiplications. The latter assumption motivates this strategy in quantum learning models since they are composed of unitary layers without nonlinearities. The only nonlinearity is introduced by the measurement \cite{chuang_book}.
\noindent
Figures \ref{fig: vqc_init_experiments}(a), \ref{fig: vqc_init_experiments}(b) and \ref{fig: vqc_init_experiments}(c) plot the average reward obtained by the quantum agent in the CartPole, Acrobot and QControl environments, respectively, following the most common initialization strategies. Glorot initialization has a slightly better performance and stability. Moreover, it is verified empirically that for policy gradients, initialization from normal distributions generates better results for the classic environments compared to uniform distributions, as reported in \cite{Zhang2022} for standard machine learning cost functions. However, in the QControl task was not observed the same behavior since uniform sampling $U(-1,1)$ achieves similar performance than $N(0,1)$.

\begin{figure}[!htb]
    \centering
    \includegraphics[width=\textwidth]{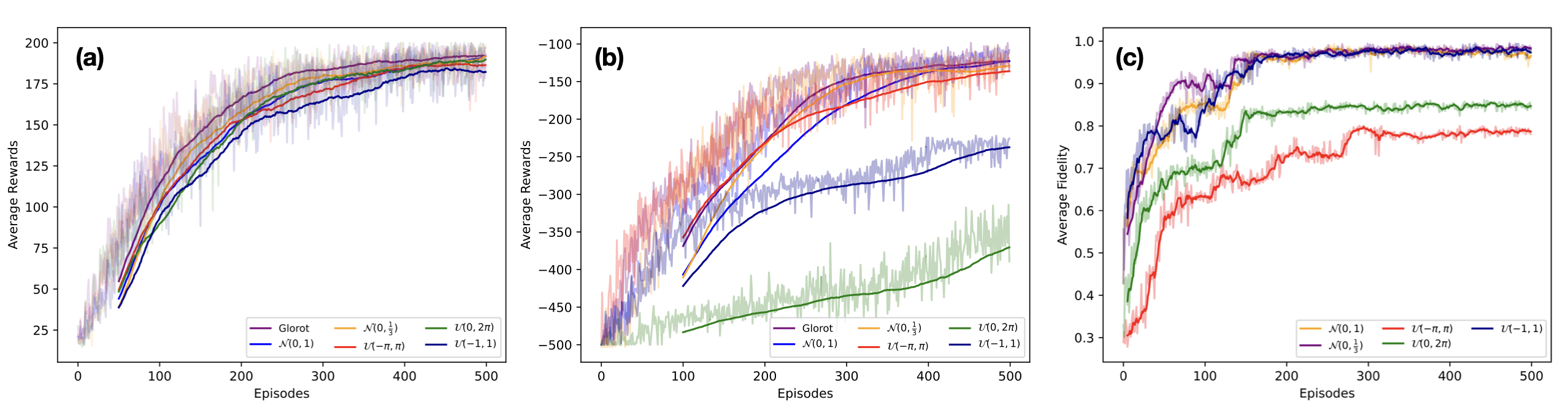}
    \caption{Normal and Uniform distributions used to initialize the parameters of the variational softmax policy. Panels (a), (b), and (c) represent the average reward comparison for the Cartpole, Acrobot, and QControl environments, respectively.}
    \label{fig: vqc_init_experiments}
\end{figure}
\section{Quantum enhancements}\label{sec: advantage}

In this section, further steps are taken toward studying the possible advantages of quantum RL agents following two different strategies:
\begin{itemize}
	\item \textbf{Parameter Count} - Comparison between quantum and classical agents regarding the number of parameters trained. It is unclear whether this is a robust approach to quantify advantage, given that the number of parameters alone can be misleading. For example, the function $sin(\theta)$ has a single parameter and is more complex than polynomial $ax^3 + bx^2 + cx + d$. However, having smaller networks could enable solutions for more significant problems at a smaller cost. Even though only parameter-shift rules are allowed on real quantum hardware, it enables a lower cost on memory than backpropagation. Perhaps the training difference may be negligible from a tradeoff between memory and time consumption for large enough problems. As reported in Table \ref{tab: parameters}, a massive reduction in the number of parameters in the quantum neural network compared with the classical counterpart for all three simulated environments.
	\item \textbf{Fisher Information} - The Fisher Information matrix spectrum is related to the effect of barren plateaus in the optimization surface itself. Studying the properties of the matrix eigenvalues should help to explain the hardness of training.
	\end{itemize}
\noindent	
The Fisher Information \cite{ly2017tutorial} is crucial both in computation and statistics as a measure of the amount of information in a random variable $X$ in a statistical model parameterized by $\theta$. Its most general form amounts to the negative Hessian of the log-likelihood. Suppose a datapoint $x$ sampled i.i.d from $p(x \lvert \theta)$ where $\theta \in \mathbb{R}^k$. Since the Hessian reveals information about the curvature of a function, the Fisher Information Matrix (see Equation \eqref{eq: fisher information matrix}) captures the sensitivity concerning changes in the parameter space, i.e., changes in the curvature of the loss function.

\begin{equation}
 F(\theta) = \mathbb{E}_{x \sim p} \left[ \nabla_{\theta} log p(x \lvert \theta) \nabla_{\theta} log p(x \lvert \theta)^{\top}\right] \in \mathbb{R}^{k \times k}
 \label{eq: fisher information matrix}
\end{equation}
\noindent
The Fisher Information matrix is computationally demanding to obtain. Thus, the empirical Fisher information matrix is usually used in practice and can be computed as in Equation \eqref{eq: empirical fisher information}:

\begin{equation}
 F(\theta) = {1 \over T} \sum_{i=1}^{T} \nabla_{\theta} log p(x_i \lvert \theta) \nabla_{\theta} log p(x_i \lvert \theta)^{\top}
 \label{eq: empirical fisher information}
\end{equation}
\noindent
Equation \eqref{eq: empirical fisher information} captures the curvature of the score function at all parameter combinations. That is, it can be used as a measure for studying barren plateaus in maximum likelihood estimators \cite{karakida2019universal}, given that all the matrix entries will approach zero with the flatness of the model's landscape. This effect is captured by looking at the spectrum of the matrix. If the model is in a barren plateau, then the eigenvalues of the matrix will approach zero \cite{Abbas_2021}.
\noindent
In the context of policy gradients, the empirical Fisher information matrix \cite{kakade_ng} is obtained by multiplying the vector resultant of the gradient of the log-policy with its transpose as in Equation \eqref{eq: pg fisher information}:

\begin{equation}
 F(\theta) = {1 \over T} \sum_{t=1}^T \nabla_{\theta} log \pi(a_t \lvert  s_t,\theta) \nabla_{\theta} log \pi(a_t \lvert  s_t,\theta)^{\top} 
 \label{eq: pg fisher information}
\end{equation}
\noindent
Inspecting the spectrum of the matrix in Equation \eqref{eq: pg fisher information} reveals the flatness of the loss landscape. Thus, it can harness the hardness of the model's trainability for both RL agents based on classical neural networks and VQCs \cite{Abbas_2021}. This work considers the trace and the eigenvalues' probability density of the Fisher Information matrix. The trace will approach zero if the model is closer to a barren plateau and the eigenvalues' probability density unveils the magnitude of the associated eigenvalues.

\begin{figure}[!htb]
    \centering
    \includegraphics[width=\textwidth]{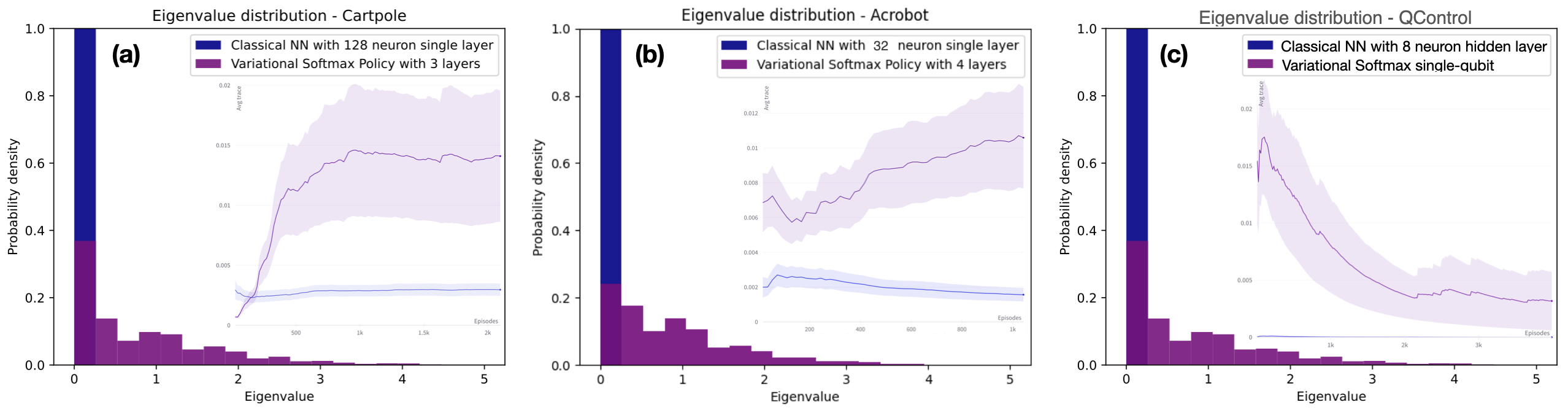}
    \caption{Probability density for the Fisher information matrix eigenvalues and average trace. Panels (a), (b), and (c) represent the eigenvalue distribution and trace of the Fisher information matrix for the Cartpole, Acrobot, and QControl environments, respectively.}
    \label{fig: fisher_experiments}
\end{figure}
\noindent
Figures \ref{fig: fisher_experiments}(a), \ref{fig: fisher_experiments}(b) and \ref{fig: fisher_experiments}(c) plot the average Fisher information matrix eigenvalue distribution for training episodes during the entire training for the CartPole, Acrobot and QControl environments, respectively. Subpanels in every plot indicate the associated information matrix trace. On average, the Fisher information matrix of the quantum model exhibits significantly larger density in eigenvalues different from zero compared to the classical model during the entire training. The same behavior is observed for every environment, explaining the improvement of the training performance for quantum agents (section \ref{sec: performance}) compared to classical ones. Although it is not visible from the eigenvalue distribution, the classical model has larger eigenvalues than the quantum model. However, their density is extremely small, thus making it negligible in a distribution plot. Further analysis is required to understand the behavior of both classical and quantum agents thoroughly.

\section{Conclusion}
\label{sec: conclusion}
In this work, a VQC was embedded into the decision-making process of an RL agent, following the policy gradient algorithm, solving a set of standard benchmarking environments efficiently. Empirical results demonstrate that such variational quantum models behave similarly or even outperform several typically used classical neural networks. The quantum-inspired policy needs fewer interactions to converge to an optimal behavior, benefiting from a reduction in the total number of trainable parameters.\\
Parameter-shift rules were used to perform gradient-based optimization resorting to the same quantum model used to compute the policy.
It was proved that the sample complexity for gradient estimation via parameter-shift rules grows only logarithmically with the number of parameters.\\
The Fisher Information spectrum was used to study the effect of barren plateaus in quantum policy gradients. The spectrum indicates that the quantum model comprises larger eigenvalues than its classical counterpart, suggesting that the optimization surface is less prone to plateaus.\\
Finally, it was verified that the quantum model could prepare a single-qubit state with high fidelity in fewer episodes than the classical counterpart with a single layer.\\
Concerning future work, it would be interesting to apply such RL-based variational quantum models to quantum control problems of larger dimensions. Specifically, their application to noisy environments would be of general interest. Moreover, studying the expectation value of policy gradients given a specific initialization strategy to support empirical claims is crucial. At last, the quantum Fisher Information  \cite{Meyer2021} should be addressed to analyze the information behind quantum states. Moreover, it would be interesting to embed the Quantum Fisher Information in a Natural Gradient optimization \cite{Stokes_2020} to derive Quantum Natural Policy Gradients. Advanced RL models such as Actor-Critic or Deep Deterministic Policy Gradients (DDPG) could benefit from quantum-aware optimization. 

\section*{Acknowledgements} 
This work is financed by National Funds through the Portuguese funding agency, FCT - Fundação para a Ciência e a Tecnologia, within grant LA/P/0063/2020, and project IBEX, with reference PTDC/CC1-COM/4280/2021.

\section*{Declarations}

\subsection*{Conflict of interests} 
The authors declare no competing interests.
\subsubsection*{Data availability} 
Data sharing not applicable to this article as no datasets were generated or analyzed during the current study.
\bibliography{VQPG}

\appendix

\section{Upper bounds on gradient estimation}

This appendix develops the proofs for Lemmas \ref{lemma: sample complexity-LPS} and \ref{lemma: query complexity-LPS}, as presented in section \ref{sec: gradient estimation}.

\subsection{$\epsilon_{\nabla}$-approximation of the policy-gradient}\label{appendix: gradient_estimation_LPS}

 Lemma \ref{lemma: sample complexity-LPS} establishes an upper bound on the number of samples required to $\epsilon_{\nabla}$-estimate the policy gradient $\hat{\nabla}_{\theta} J(\theta)$.
 
 \sample*
 
  \begin{proof}
 The policy gradient is estimated by resorting to Monte Carlo techniques, as described by Equation \eqref{eq: quantum policy gradient-LPS}, restated here for completion.
 
 \begin{equation*}
  \nabla_{\theta} J(\theta) \;=\; \frac{1}{N} \sum_{i=0}^{N-1}\sum_{t=0}^{T-1} G_t(\tau_i) \beta \left(\nabla_{\theta} \langle a_{t_i} \rangle_{\theta} - {\sum_{b_{t_i}} \pi(b_{t_i}\lvert s_{t_i},\theta) \nabla_{\theta} \langle b_{t_i} \rangle_{\theta}}\right)
\end{equation*}
\noindent
Recall that the number of samples is defined as the number of visited states. Since there are $N$ trajectories (sequences of actions, $\tau_i$), each visiting $T$ states, the total number of samples is equal to $NT$.\\
Since the expectation value of a single qubit observable is bounded as $\langle \sigma_z \rangle \in \left[-1,1\right]$ and since the gradient of an action's expected value is given by equation \eqref{eq: parameter-shift rules-LPS}, then $\nabla_{\theta} \langle a \rangle_{\theta} \in \left[-1,1\right]$. Therefore, the following holds:
\begin{equation}
\beta \left(\nabla_{\theta} \langle a_{t_i} \rangle_{\theta} - {\sum_{b_{t_i}} \pi(b_{t_i}\mid s_{t_i},\theta) \nabla_{\theta} \langle b_{t_i} \rangle_{\theta}}\right) \in \left[-2\beta , 2\beta \right]
\label{eq: lemma4-1_res1}
\end{equation}
\noindent
By defining $R_{max}$ as the maximum possible reward at any time step and by recalling Equation \eqref{eq: return}, then
\begin{equation}
G(\tau) = \sum_{t=0}^{T-1} \gamma^t r_{t+1} \leq R_{max} \sum_{t=0}^{T-1} \gamma^t = R_{max} \frac{\gamma^T-1}{\gamma-1} 
\end{equation}
where the expression for the sum of T terms of a geometric progression was used. Using this upper bound on $G(\tau)$ enables the following result
\begin{equation}
    \sum_{t=0}^{T-1} G_t(\tau) \leq R_{max} \sum_{t=0}^{T-1} \frac{\gamma^{T-t} - 1}{(\gamma - 1)} \leq R_{max} \frac{T}{(\gamma - 1)^2}
\label{eq: lemma4-1_res2}
\end{equation}
\noindent
where the last inequality can be obtained by algebraic development and by resorting again to the sum of terms of a geometric progression. 
\noindent
Combining results \eqref{eq: lemma4-1_res1} and \eqref{eq: lemma4-1_res2}, gives
\begin{equation}
    \sum_{t=0}^{T-1} G_t(\tau_i) \beta \left(\nabla_{\theta} \langle a_{t_i} \rangle_{\theta} - {\sum_{b_{t_i}} \pi(b_{t_i}\lvert s_{t_i},\theta) \nabla_{\theta} \langle b_{t_i} \rangle_{\theta}}\right) \in \left[-\frac{2\beta R_{max} T}{(\gamma - 1)^2} , \frac{2\beta R_{max} T}{(\gamma - 1)^2} \right]
    \label{eq: lemma4-1_res3}
\end{equation}
\noindent
From Hoeffding's inequality \cite{hoeffding_ineq}, the probability of the average over $N$ estimates of the policy gradient random variable being $\epsilon_{\nabla}$-inaccurate is given by 
 
 %
 
 \begin{equation}
 \mathbb{P} \left[ \lvert \nabla_{\theta}^* J(\theta) - \nabla_{\theta} J(\theta)\rvert \geq \epsilon_{\nabla} \right] \leq 2 \exp \left(- \frac{2N\epsilon_{\nabla}^2 (\gamma - 1)^4}{16\beta^2 R_{max}^2 T^2} \right)
 \end{equation}
 \noindent
 From the union bound, for all $k$ parameters , the probability is less than 
 
 \begin{equation}
 \mathbb{P} \left[ \bigcup_{k} 2 \exp \left(- \frac{N\epsilon_{\nabla}^2 (\gamma - 1)^4}{8\beta^2 R_{max}^2 T^2} \right) \right] \leq 2k \exp \left(- \frac{N\epsilon_{\nabla}^2 (\gamma - 1)^4}{8\beta^2 R_{max}^2 T^2} \right)
 \end{equation}
 \noindent
 \vspace{3mm}
Let $\delta_{\nabla}=\mathbb{P} \left[ \lvert \nabla_{\theta}^* J(\theta) - \nabla_{\theta} J(\theta)\rvert \geq \epsilon_{\nabla} \right] $. Then 

\begin{equation}
\begin{split}
1-\delta_{\nabla} & =\mathbb{P} \left[ \lvert \nabla_{\theta}^* J(\theta) - \nabla_{\theta} J(\theta)\rvert \leq \epsilon_{\nabla} \right] \\
1-\delta_{\nabla} & \geq 1 - 2k \exp \left(- \frac{N\epsilon_{\nabla}^2 (\gamma - 1)^4}{8\beta^2 R_{max}^2 T^2} \right) \\
\delta_{\nabla}& \leq 2k \exp \left(- \frac{N\epsilon_{\nabla}^2 (\gamma - 1)^4}{8\beta^2 R_{max}^2 T^2} \right)
\end{split}
\end{equation}
 \noindent
 Thus, an upper bound on $N$ can be obtained
 \begin{equation}
 N \leq \frac{8\beta^2 R_{max}^2 T^2}{\epsilon_{\nabla}^2 (\gamma - 1)^4} \log ({2k \over \delta_{\nabla}})
 \end{equation}
 \noindent
 \noindent
Considering $NT$ samples completes the proof.
 
\end{proof}

\subsection{Total number of quantum circuit evaluations}\label{appendix: query_complexity_LPS}
 
Lemma \ref{lemma: query complexity-LPS} establishes an upper bound on the number of quantum circuit evaluations (or shots) required to $\epsilon_{\langle \rangle}$-estimate the policy gradient $\hat{\nabla}_{\theta} J(\theta)$ with probability $1-\delta_{\langle \rangle}$. This result builds on Lemma \ref{lemma: sample complexity-LPS} and the same approach is used to demonstrate it.
 
 \query*

\begin{proof}

An action preference observable $\langle a \rangle_{\theta}$ is given by a single-qubit observable $\langle \sigma_z \rangle$, as described in Section \ref{sec: measurements}. The number of shots, $n'$, required to estimate the observable expectation with additive error $\epsilon_{\langle \rangle}$ with probability $1-\delta_{\langle \rangle}$ is akin to the estimation of the probability of a Bernoulli distribution using Hoeffding inequality. Since $\langle a \rangle_{\theta} \in [-1,1]$, then, by resorting to Hoeffding inequality and the union bound, we have
 \begin{equation}
 \mathbb{P} \left[ \lvert \langle a \rangle_{\theta}^* - \langle a \rangle_{\theta} \rvert \geq \epsilon_{\langle \rangle} \right] \leq 2 k \exp \left(- \frac{n'\epsilon_{\langle \rangle}^2}{2} \right)
 \end{equation}
 \noindent
\noindent
Following the same reasoning as described in the proof of Lemma \ref{lemma: sample complexity-LPS} , $n'$ is given by
\begin{equation}
 n' \leq \frac{2}{\epsilon_{\langle \rangle}^2} \log (\frac{2k}{\delta_{\langle \rangle}})
 \end{equation}
 \noindent
Since the observable's gradient $\nabla_{\theta}\langle a \rangle_{\theta}$ is estimated via parameter shift rules, as stated in Equation \eqref{eq: parameter-shift rules-LPS}, it requires the estimation of each action preference observable twice, i.e. both $\langle a \rangle_{\theta+\frac{\pi}{2}}$ and $\langle a \rangle_{\theta-\frac{\pi}{2}}$.  Therefore, the number of shots, $n$, required to estimate $\nabla_{\theta}\langle a \rangle_{\theta}$ is given by
\begin{equation}
 n = 2n' \leq \frac{4}{\epsilon_{\langle \rangle}^2} \log (\frac{2k}{\delta_{\langle \rangle}}) \approx \mathcal{O}\left( \frac{1}{\epsilon_{\langle \rangle}^2} \log (\frac{2k}{\delta_{\langle \rangle}}) \right)
 \end{equation}
 \noindent
Recalling that $\mathcal{O}(NT)$ samples are needed as in Lemma \ref{lemma: sample complexity-LPS} and that each sample incurs $\lvert A \rvert$ estimates, completes the proof.

\end{proof}

\end{document}